\newtheorem{theorem}{Theorem}
\newtheorem{lemma}{Lemma}
\newtheorem{definition}{Definition}
\newtheorem{proposition}{Proposition}
\newtheorem{remark}{Remark}
\crefname{claim}{Claim}{Claims}
\crefname{fact}{Fact}{Facts}
\crefname{example}{Example}{Examples}
\newcommand{\E}{\textrm{\textbf{E}}}
\newcommand{\Var}{\textrm{\textbf{Var}}}
\newcommand{\e}{\textrm{{e}}}
\newcommand{\eps}{\varepsilon}
\newcommand{\norm}[1]{\left\lVert#1\right\rVert}
\renewcommand{\phi}{\varphi}
\newcommand{\poly}{\mathrm{poly}}
\renewcommand{\lg}{\log}
\newcommand{\km}{\texttt{$k$-means\,}}
\newcommand{\kmpp}{\texttt{$k$-means$++$\,}}
\newcommand{\kmpipe}{\texttt{$k$-means$||$\,}}
\renewcommand{\P}{\textrm{P}}
\icmltitlerunning{Simple Analysis of k-meansII}
\begin{document}

\twocolumn[
\icmltitle{Simple and Sharp Analysis of \kmpipe}




\begin{icmlauthorlist}
\icmlauthor{Václav Rozhoň}{ETH}
\end{icmlauthorlist}

\icmlaffiliation{ETH}{ETH, Zurich}

\icmlcorrespondingauthor{Václav Rozhoň}{rozhonv@ethz.ch}

\icmlkeywords{$k$-means, clustering, distributed algorithms}

\vskip 0.3in
]



\printAffiliationsAndNotice{} 

\begin{abstract}

We present a simple analysis of \kmpipe (Bahmani et al., PVLDB 2012) -- a distributed variant of the k-means++ algorithm (Arthur and Vassilvitskii, SODA 2007). Moreover, the bound on the number of rounds is improved from  $O(\log n)$ to $O(\log n / \log\log n)$, which we show to be tight.

\end{abstract}

\section{Introduction}

Clustering is one of the classical machine learning problems. 
Arguably the simplest and most basic formalization of clustering is the \km formulation: we are given a (large) set of points $X$ in the Euclidean space and are asked to find a (small) set of $k$ centers $C$ so as to minimize the sum of the squared distances between each point and the closest center. That is, we minimize
\[
\phi_X( C) = \sum_{x \in X} \min_{c_i \in C} \Vert x - c_i \Vert^2,
\]
Due to its simplicity, $k$-means is considered as the problem that tests our understanding of clustering. 

The classical, yet still state-of-the-art algorithm \kmpp \cite{Arthur2007} combines two ideas to approach the problem. 
First, a fast randomized procedure finds a set of $k$ centers that by itself is known to be $O(\log k)$ competitive in expectation with respect to the optimal solution. 
Then, the classical Lloyd's algorithm \cite{Lloyd82} is run to improve the found solution until a local minimum is achieved. 

A significant disadvantage of the \kmpp is the inherent sequential nature of the first, seeding step: one needs to pass through the whole data $k$ times, each time to sample a single center. 
To overcome this problem, \cite{bahmani2012scalable} devised \kmpipe: a distributed version of the \kmpp algorithm. 

In their algorithm one passes through the dataset only few times to extract a set of roughly $O(k)$ candidate centers, from which one later chooses the final $k$ centers by the means of the classical \kmpp algorithm. 

\paragraph{Our contribution} In this work we first provide a new, simple analysis of \kmpipe, thus simplifying known proofs \cite{bahmani2012scalable} and \cite{bachem2017distributed}. 
In particular, if we denote by $\mu_X$ the mean of $X$ and $\phi^*$ the optimal solution, we prove in \cref{sec:simple} that $O(\log\frac{\phi_X(\mu_X)}{\phi^*})$ rounds of the \kmpipe algorithm suffice to get expected constant approximation guarantee. This matches known guarantees. 

In \cref{sec:sharp_upper}, we then proceed by refining the analysis to provide a better bound on the number of sampling rounds needed by the algorithm: we prove that $O(\log\frac{\phi_X(\mu_X)}{\phi^*} / \log\log \frac{\phi_X(\mu_X)}{\phi^*})$ rounds suffice. 
In \cref{sec:sharp_upper}, the second bound is shown to be tight for a wide range of values of $k$ by generalisation of a known lower bound \cite{bachem2017distributed}. 

The first analysis of \kmpipe \cite{bahmani2012scalable} invokes linear programming duality as a part of the argument. 
Its second analysis \cite{bachem2017distributed} is more similar to ours, as it only relies on basic lemmas known from the analysis of \kmpp \cite{arthur2007k}. Our one-page analysis is considerably shorter and, we believe, also simpler. 
It can be summed up as ``view one sampling step of the algorithm as a weighted balls into bins process''. We explain this in more detail in \cref{sec:simple}.

\section{Background and Notation}

We mostly adopt the notation of the paper \cite{bahmani2012scalable}. 
Let $X = \{x_1, \dots, x_n\}$ be a point set in the $d$-dimensional Euclidean space. We denote the standard Euclidean distance between two points $x_i, x_j$ by $\norm{x_i - x_j}$ and for a subset $Y \subseteq X$ we define the distance between $Y$ and $x \in X$ as $d(x,Y) = \min_{y \in Y} \norm{x-y}$. 

For a subset $Y \subseteq X$ we denote by $\mu(Y)$ its centroid, i.e., $\mu(Y) = \frac{1}{|Y|} \sum_{y \in Y} y$. 
For a set of points $C = \{c_1, \dots, c_k\}$ and $Y \subseteq X$ we define the \textit{cost} of $Y$ with respect to $C$ as $\phi_Y(C) = \sum_{y \in Y} d^2(y, C)$ and use a shorthand $\phi_x(C)$ for $\phi_{\{x\}}(C)$ and $\phi_X(c)$ for $\phi_X(\{c\})$. 
It is easy to check that for a given point set $A$, the center $x$ that minimizes its cost $\phi_A(x)$ is its mean $\mu_A$. 

The goal of the $k$-means problem is to find a set of \emph{centers} $C$, $|C|=k$ that minimizes the cost $\phi_X(C)$ for a given set of points $X$. 
The $k$-means problem is known to be NP-hard \cite{Aloise2009, mahajan2009planar} and even hard to approximate up to small constant precision \cite{awasthi2015hardness, lee2017improved}. 
From now on we fix an optimal solution $C^*$ and denote by $\phi^*$ its cost.
We assume that minimum distance between two points $x\not=y$ from $X$ is at least $1$ and that for the maximum distance $\Delta$ between two points from $X$ we have $\Delta = O(\log n)$; this implies $\phi_X(\mu_X) = \poly(n)$ and hence, the simplified complexities $O(\log n)$ and $O(\log n / \log \log n)$ in abstract. 


\subsection{\kmpp Algorithm} The classical \kmpp algorithm \cite{arthur2007k, ostrovsky2013effectiveness} computes the centers in $k$ sampling steps. 
After the first step where the first center is taken from uniform distribution, each subsequent step samples a new point from \emph{$D^2$-distribution}: if $C$ is the current set of centers, $x$ is being sampled with probability $\frac{\phi_x(C)}{\phi_X(C)}$, i.e., we sample the points proportional to their current cost. 

The \kmpp algorithm is known to provide an $O(\log k)$ approximation guarantee, in expectation. 
The analysis crucially makes use of the following two lemmas that we will also use. 

The first lemma tells us that if we sample uniformly a random point $p$ from some point set $A$, we expect the cost $\phi_A(p)$ to be within a constant factor of the cost $\phi_A(\mu_A)$ which is the smallest cost achievable with one center. 
One can think of $A$ as being a cluster of the optimal solution or the whole point set $X$. 

\begin{lemma}[Lemma 3.1 in \cite{Arthur2007}]
\label{lem:2apx}
Let $A$ be an arbitrary set of points. 
If we sample a random point $p \in A$ according to the uniform distribution, we have $\E[\phi_A(\{p\})] \le 2\phi_A(\mu_A)$. 
\end{lemma}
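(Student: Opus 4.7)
The plan is to prove the bound by a direct computation, using the parallel axis identity (equivalently, the bias–variance decomposition): for any point $q$ and any finite set $A$ of points,
\[
\sum_{a \in A} \|a - q\|^2 \;=\; \sum_{a \in A} \|a - \mu_A\|^2 \;+\; |A|\,\|q - \mu_A\|^2.
\]
This identity is the crux of the argument, and it follows by expanding $\|a-q\|^2 = \|a-\mu_A\|^2 + 2\langle a - \mu_A, \mu_A - q\rangle + \|\mu_A - q\|^2$ and noting that $\sum_{a \in A}(a - \mu_A) = 0$ by the definition of the centroid, so the cross term vanishes when summed over $a$.

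Given the identity, the lemma reduces to arithmetic. First I would write the expectation out:
\[
\E[\phi_A(\{p\})] \;=\; \frac{1}{|A|} \sum_{p \in A} \sum_{a \in A} \|a - p\|^2.
\]
Applying the identity with $q = p$ to the inner sum gives
\[
\E[\phi_A(\{p\})] \;=\; \frac{1}{|A|} \sum_{p \in A} \Bigl( \phi_A(\mu_A) + |A|\,\|p - \mu_A\|^2 \Bigr).
\]
The first term inside the sum is independent of $p$ and contributes $\phi_A(\mu_A)$ after averaging; the second term contributes $\sum_{p \in A} \|p - \mu_A\|^2 = \phi_A(\mu_A)$ as well. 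Adding the two gives exactly $2\phi_A(\mu_A)$, which is in fact an equality, not merely the claimed inequality.

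Since the whole argument is a two-line computation once the parallel axis identity is in hand, there is no real obstacle: the only thing to be careful about is verifying that identity cleanly. I would state it as a preliminary observation, note that it holds because $\mu_A$ is the minimizer of $q \mapsto \sum_{a \in A}\|a - q\|^2$ and the cross term vanishes, and then plug $q = p$ and average over $p \in A$.
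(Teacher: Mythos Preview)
Your proof is correct and is exactly the standard argument from Arthur and Vassilvitskii: expand via the parallel axis identity and average, obtaining in fact the equality $\E[\phi_A(\{p\})] = 2\phi_A(\mu_A)$. The present paper does not give its own proof of this lemma; it simply cites Lemma~3.1 of \cite{Arthur2007}, whose proof is precisely the computation you wrote.
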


The second lemma ensures that up to a constant factor the same guarantee holds even for the $D^2$ distribution. 

\begin{lemma}[Lemma 3.2 in \cite{Arthur2007}]
\label{lem:8apx}
Let $A$ be an arbitrary set of points, $C$ be an arbitrary set of centers and $p \in A$ be a point chosen by $D^2$ weighting.
Then, $\E[\phi_A(C \cup \{p\})] \le 8\phi_A(\mu_A)$.
\end{lemma}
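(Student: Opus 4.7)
The plan is to follow the template of \cref{lem:2apx}, adapting its structure to account for both the $D^2$-weighting and the presence of the existing centers $C$. First, I would expand the expectation directly from the definition of $D^2$-weighting,
\[
\E[\phi_A(C \cup \{p\})] = \sum_{p \in A} \frac{d^2(p, C)}{\phi_A(C)} \sum_{a \in A} \min\bigl(d^2(a, C), \|a-p\|^2\bigr).
\]

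The next step is to control the factor $d^2(p, C)$ sitting in the numerator. For any $a \in A$, the triangle inequality $d(p, C) \le d(a, C) + \|a-p\|$ together with the power-mean bound $(x+y)^2 \le 2x^2 + 2y^2$ gives $d^2(p, C) \le 2 d^2(a, C) + 2 \|a-p\|^2$. Averaging this over $a \in A$ yields
\[
d^2(p, C) \le \frac{2}{|A|}\bigl(\phi_A(C) + \phi_A(p)\bigr).
\]
Substituting this bound back splits the expectation into two pieces. The central trick is to bound the $\min$ asymmetrically: in the piece coming from $\phi_A(C)$ I use $\min(d^2(a, C), \|a-p\|^2) \le \|a-p\|^2$, while in the piece coming from $\phi_A(p)$ I use $\min(d^2(a, C), \|a-p\|^2) \le d^2(a, C)$. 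With these choices, the denominator $\phi_A(C)$ cancels cleanly in both pieces.

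After cancellation, each piece reduces (up to a factor $2/|A|$) to the double sum $\sum_{p,a \in A} \|a-p\|^2 = \sum_{p \in A} \phi_A(p)$, which by \cref{lem:2apx} applied to a uniformly random $p \in A$ is at most $2|A|\phi_A(\mu_A)$. Thus each piece contributes at most $4\phi_A(\mu_A)$, and summing them yields the claimed $8\phi_A(\mu_A)$.

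The main obstacle is precisely the asymmetric splitting step: discarding the ``wrong'' half of the $\min$ in either piece leaves the factor $\phi_A(C)$ in the denominator uncancelled, which would prevent any bound in terms of $\phi_A(\mu_A)$ alone. Once the correct pairing is made, the rest is a routine reduction to \cref{lem:2apx}.
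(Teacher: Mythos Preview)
Your argument is correct and is exactly the original proof from \cite{Arthur2007}; note that the present paper does not reprove \cref{lem:8apx} at all but simply quotes it as a black box. There is nothing to compare against here, and your write-up would serve as a faithful reconstruction of the cited proof.
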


The analysis of \kmpp from the above two lemmas by a careful inductive argument: important observation is that sampling points proportional to their cost implies that we sample from an optimal cluster with probability proportional to its current cost, hence we preferably sample from costly clusters. Still, there is a small probability in each step that we sample from an already ``covered'' cluster and this is the reason why the final approximation factor is $O(\log k)$ rather than $O(1)$. 


\subsection{\kmpipe Algorithm} The distributed variant of the \kmpp algorithm called \kmpipe, was introduced in \cite{bahmani2012scalable}. 
The algorithm consists of two parts. In the first, \emph{overseeding} part (\cref{alg:kmpipe}), we proceed in $t$ sequential rounds after sampling uniformly a single center as in the first step of \kmpp. In each of the $t$ sampling rounds we sample each point of $X$ with probability $\min\left( 1, \frac{\ell \phi_x(C)}{\phi_X(C)} \right)$, i.e., $\ell \ge 1$ times bigger than the probability of taking the point in \kmpp, independently on the other points. 

\begin{algorithm}[tb]
    \caption{\kmpipe overseeding}
   \label{alg:kmpipe}
   {\bfseries Require}~data~$X$,\#~rounds~$t$,~sampling~factor~$\ell$
\begin{algorithmic}[1]
    \STATE Uniformly sample $x \in X$ and set $C = \{ x \}$.
    \FOR{$i \leftarrow 1, 2, \dots, t$}
        \STATE $C' \leftarrow \emptyset$
        \FOR{$x \in X$}
            \STATE Add $x$ to $C'$ with probability $\min\left( 1, \frac{\ell \phi_x(C)}{\phi_X(C)} \right)$
        \ENDFOR
        \STATE $C \leftarrow C \cup C'$
    \ENDFOR
    \STATE {\bfseries Return } $C$
\end{algorithmic}
\end{algorithm}

In the second part of the algorithm we collect the set of sampled centers $C$ and create a new, weighted, instance $X'$ of \km in which the weight of every center is equal to the number of points of $X$ to which the given center is the closest. 
The new instance is solved, e.g., with \kmpp as in \cref{alg:kmpipe2}. 
One can prove that finding a set $C$ with $\phi_X(C) = O(\phi^*)$ in \cref{alg:kmpipe} implies that the overall approximation guarantee is, up to a constant, the same as the approximation guarantee of the algorithm used in the second part of the algorithm (see \cite{bachem2017distributed}, proof of Theorem 1, or \cite{guha2003streaming}, in general), which is, in this case, $O(\log k)$ in expectation. 

\begin{algorithm}[tb]
    \caption{\kmpipe \cite{bahmani2012scalable}}
   \label{alg:kmpipe2}
    {\bfseries Require}~data~$X$,\#~rounds~$t$,~sampling~factor~$\ell$
    \begin{algorithmic}[1]
    \STATE $B \leftarrow $ Result of \cref{alg:kmpipe} applied to $(X,t,\ell)$
    \FOR{$c \in B$}
        \STATE $w_c \leftarrow$ \# of points $x \in X$ whose closest center in $B$ is $c$
    \ENDFOR
    \STATE $C \leftarrow$ Run \kmpp on the weighted instance $(B,w)$ 
    \STATE {\bfseries Return } $C$
\end{algorithmic}
\end{algorithm}

Hence, the analysis of \cref{alg:kmpipe2} boils down to bounding the number of steps $t$ needed by the \cref{alg:kmpipe} to achieve constant approximation guarantee for given sampling factor $\ell$. 
The authors of \cite{bahmani2012scalable} prove the following. 
\begin{theorem}[roughly Theorem 1 in \cite{bahmani2012scalable}]
\label{thm:main}
If we choose $t = O(\log\frac{\phi_X(\mu_X)}{\phi^*})$ and $\ell \ge k$, \cref{alg:kmpipe} gives a set $C$ with $\E[\phi_X(C)] = O(\phi^*)$. 
\end{theorem}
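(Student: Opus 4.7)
The plan is to establish the per-round contraction
\[
\E\bigl[\phi_X(C\cup C')\;\big\vert\;C\bigr]\;\le\;\phi_X(C)/e\;+\;8\phi^*.
\]
Iterating this over $t$ rounds and using \cref{lem:2apx} to bound the expected cost right after the initial uniform sample by $2\phi_X(\mu_X)$ yields $\E[\phi_X(C_t)]\le 2\phi_X(\mu_X)\,e^{-t}+O(\phi^*)$, which is $O(\phi^*)$ for $t=O(\log(\phi_X(\mu_X)/\phi^*))$.

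To prove the contraction, fix $C$, let $A_1,\dots,A_k$ be the optimal clusters, and write $\phi_i := \phi_{A_i}(C)$, $\phi_i^* := \phi_{A_i}(\mu(A_i))$, $s_i := \ell\phi_i/\phi_X(C)$. The key per-cluster inequality is
\[
\E[\phi_{A_i}(C\cup C')]\;\le\;e^{-s_i}\phi_i + 8(1-e^{-s_i})\phi_i^*.
\]
I would prove this by coupling \kmpipe's sampling with an exponential process. For each $x \in X$ draw independent $E_x\sim \mathrm{Exp}(\phi_x(C))$ and consider the auxiliary set $\tilde C' := \{x : E_x \le \ell/\phi_X(C)\}$; since $\P[x\in \tilde C'] = 1-e^{-\ell\phi_x/\phi_X(C)}\le \min(1,\ell\phi_x/\phi_X(C))$, the true \kmpipe set $C'$ can be coupled as a superset of $\tilde C'$, and extra centers only decrease the cost, so it suffices to prove the bound for $\tilde C'$. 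Standard properties of independent exponentials give that $x_0 := \argmin_{x\in A_i}E_x$ is $D^2$-distributed on $A_i$ (the probability of $x$ being the argmin is $\phi_x(C)/\phi_i$) and \emph{independent} of $\min_{x\in A_i}E_x\sim \mathrm{Exp}(\phi_i)$. Hence $\P[x_0\in\tilde C'] = 1-e^{-s_i}$ independently of $x_0$; on $\{x_0\in\tilde C'\}$ we use $\phi_{A_i}(C\cup\tilde C')\le \phi_{A_i}(C\cup\{x_0\})$ together with \cref{lem:8apx}, and on the complement we use $\phi_{A_i}(C\cup\tilde C')\le\phi_i$.

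Summing the per-cluster bound, the additive part contributes $\sum_i 8(1-e^{-s_i})\phi_i^* \le 8\phi^*$. For the multiplicative part,
\[
\sum_i e^{-s_i}\phi_i \;=\;\frac{\phi_X(C)}{\ell}\sum_i s_i e^{-s_i}\;\le\;\frac{\phi_X(C)}{\ell}\cdot\frac{k}{e}\;\le\;\frac{\phi_X(C)}{e},
\]
using the pointwise bound $se^{-s}\le 1/e$ for all $s\ge 0$ and $\ell\ge k$. This completes the contraction.

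The main obstacle is the per-cluster bound. A naive approach that writes $\E[\phi_{A_i}(C\cup C')]\le \phi_i\,\P[\text{no sample from }A_i]+\sum_{x\in A_i}p_x\phi_{A_i}(C\cup\{x\})$ and feeds the second sum into \cref{lem:8apx} picks up an extra factor $s_i$; once summed over $i$ this yields $\Theta(\ell\phi^*)$ rather than $O(\phi^*)$ and fails to close the recursion. The exponential coupling removes exactly this slack because the representative $x_0$ is $D^2$-distributed \emph{independently} of whether it lands in $C'$, which is precisely the input form that \cref{lem:8apx} consumes.
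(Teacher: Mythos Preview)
Your argument is correct and genuinely different from the paper's. The paper does not track $\phi_X(C)$ directly; instead it introduces \emph{settled} clusters (those with $\phi_A(C)\le 10\phi_A^*$), proves via Markov on \cref{lem:8apx} that an unsettled cluster fails to become settled with probability at most $\exp(-\ell\phi_A(C)/(5\phi_X(C)))$ (\cref{prop:make_settled}), and then shows in \cref{prop:main} (after a heavy/light case split) that the total cost $\phi_U^t$ of unsettled clusters contracts by a factor $1-1/50$ per round whenever $\phi_X^t\ge 20\phi^*$. Your exponential coupling replaces both the Markov step and the heavy/light split: because $\argmin_{x\in A_i}E_x$ is exactly $D^2$-distributed on $A_i$ and independent of whether $\min_{x\in A_i}E_x$ falls below the threshold, you can feed it straight into \cref{lem:8apx} and obtain the clean per-cluster bound $\E[\phi_{A_i}(C\cup C')]\le e^{-s_i}\phi_i+8(1-e^{-s_i})\phi_i^*$. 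Summing with $se^{-s}\le 1/e$ and $\ell\ge k$ then gives an unconditional contraction $\E[\phi_X(C\cup C')\mid C]\le \phi_X(C)/e+8\phi^*$, with better constants and no side condition on $\phi_X^t$. On the other hand, the paper's settled/unsettled bookkeeping is precisely what is reused in the sharper $O(\log\gamma/\log\log\gamma)$ analysis of \cref{sec:sharp_upper}, where one needs to track both the \emph{number} and the \emph{cost} of unsettled clusters; your direct bound on $\phi_X(C)$ does not by itself expose that finer structure.
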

Their result was later reproved in \cite{bachem2017distributed}. We provide a new, simple proof in \cref{sec:simple}.

\subsection{Other Related Work}
\kmpp was introduced in \cite{Arthur2007} and a similar method was studied by \cite{ostrovsky2013effectiveness}. 
This direction led to approximation schemes \cite{jaiswal2012simple, jaiswal2015improved}, constant approximation results based on additional local search \cite{lattanzi2019better, we}, constant approximation bi-criteria results based on sampling more centers \cite{aggarwal2009adaptive, ailon2009streaming, wei2016constant},  approximate \kmpp based on Markov chains \cite{bachem2016approximate, bachem2016fast} or coresets \cite{bachem2018scalable}, analysis of hard instances \cite{Arthur2007, brunsch2013bad} or under adversarial noise \cite{bhattacharya2019noisy}. 
Consult \cite{celebi2013comparative} for an overview of different seeding methods for \km. 

There is a long line of work on a related $k$-median problem. 
A $k$-median algorithm of \cite{mettu2012optimal} is quite similar to \kmpipe and its analysis is also quite similar to our analysis. 
We discuss some other related work on \km more thoroughly at the end of \cref{sec:simple}. 

\section{Warm-up: a Simple Analysis}
\label{sec:simple}

In this section we provide a simple analysis of \cref{alg:kmpipe} based on viewing the process as a variant of the  \emph{balls into bins} problem. 
Recall that in the most basic version of the balls into bins problem, one throws $k$ balls into $k$ bins, each ball to a uniformly randomly chosen bin, and asks, e.g., what is a probability of a certain bin to be hit by a ball. This is equal to 
$
1 - \left( 1 - 1/k \right)^k \approx 1 - 1/\e, 
$
hence, we expect a constant proportion of the bins to be hit in a single step. 

To see the connection to our problem, we first define the notion of \emph{settled} clusters that is similar to notions used e.g. in \cite{aggarwal2009adaptive, lattanzi2019better}. 

\begin{definition}[Settled clusters]
We call a cluster $A$ of the optimum solution $C^*$ \textit{settled} with respect to current solution $C$, if $\phi_{A}(C) \le 10 \phi_{A}(C^*)$. 
Otherwise, we call $A$ \textit{unsettled} with respect to $C$. 

\end{definition}

We view the clusters of $C^*$ as bins and each sampling round of \cref{alg:kmpipe} as shooting at each bin and hitting it (i.e., making the cluster settled) with some probability. 
Intuitively, this probability is proportional to the cost of the cluster, since this is how we defined the probability of sampling any point of $X$. 
So, we view the process as a more general and repeated variant of the balls into bins process, where the costs of the clusters act like ``weights'' of the bins and we sample with probability (roughly) proportional to these weights. 
We prove now that clusters are being settled with probability roughly proportional to their cost (unless they are very costly). 

\begin{lemma}
\label{prop:make_settled}
Let $C$ be the current set of sampled centers and let $A$ be an unsettled cluster of the optimum solution. 
The cluster $A$ is \emph{not} made settled in the next iteration of \cref{alg:kmpipe} with probability at most
\[
\exp\left(-\frac{\ell \phi_{A}(C) }{ 5\phi_X(C)}\right). 
\]
\end{lemma}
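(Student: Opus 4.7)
The plan is to view each sampling round as a family of independent Bernoulli trials, reduce the event ``$A$ becomes settled'' to the event ``at least one carefully chosen \emph{good} point of $A$ lands in $C'$'', and then estimate the miss probability via the standard $\prod(1-p_x)\le \exp(-\sum p_x)$ bound.

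First I would define $A_{\mathrm{good}} = \{x \in A : \phi_A(C \cup \{x\}) \le 10\,\phi_A(\mu_A)\}$. Because $\mu_A$ minimises the $1$-means cost on $A$, we have $\phi_A(\mu_A) \le \phi_A(C^*)$, so adding a single point of $A_{\mathrm{good}}$ to $C$ is enough to make $A$ settled. Hence the probability that $A$ is not made settled in this round is at most the probability that $C'$ misses the whole set $A_{\mathrm{good}}$.

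The key, and only nontrivial, step is to show that the $D^2$-mass of $A_{\mathrm{good}}$ inside $A$ is a constant fraction of $\phi_A(C)$. For this I would apply \cref{lem:8apx} to the cluster $A$: if $p$ is drawn from the $D^2$ distribution on $A$, then $\E[\phi_A(C \cup \{p\})] \le 8\,\phi_A(\mu_A)$, so Markov's inequality gives $\Pr[p \in A_{\mathrm{good}}] \ge 1/5$. Translating the probability into an explicit sum using $\Pr[p=x]=\phi_x(C)/\phi_A(C)$, this rearranges to $\phi_{A_{\mathrm{good}}}(C) \ge \phi_A(C)/5$.

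Finally, the trials are independent, so the probability of missing $A_{\mathrm{good}}$ equals $\prod_{x \in A_{\mathrm{good}}}(1-p_x)$ with $p_x = \min(1, \ell\,\phi_x(C)/\phi_X(C))$. If some $p_x = 1$ for $x \in A_{\mathrm{good}}$, then $A$ is settled with probability $1$ and the bound holds trivially; otherwise $\sum_{x \in A_{\mathrm{good}}} p_x = \ell\,\phi_{A_{\mathrm{good}}}(C)/\phi_X(C) \ge \ell\,\phi_A(C)/(5\phi_X(C))$, and combining this with $1-y \le e^{-y}$ yields exactly the claimed exponential bound. I expect no obstacle beyond the Markov-on-\cref{lem:8apx} step; the rest is purely mechanical, and the $\min$ in the sampling probability only makes matters easier rather than harder.
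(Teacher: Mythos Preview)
Your proposal is correct and matches the paper's own proof essentially line for line: the set $A_{\mathrm{good}}$ is exactly the paper's $A'$, the $1/5$ mass lower bound comes from the same Markov-on-\cref{lem:8apx} step, and the final $\prod(1-p_x)\le e^{-\sum p_x}$ calculation (with the $\min$ handled as a trivial case) is identical. The only cosmetic difference is that you threshold at $10\,\phi_A(\mu_A)$ and then invoke $\phi_A(\mu_A)\le\phi_A(C^*)$, whereas the paper folds this into a single line by writing $\phi_A^*$ directly.
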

Intuitively, for clusters $A$ with $\phi_A(C) \le \phi_X(C)/\ell$ the probability of hitting them in one step is of order $\frac{\ell \phi_A(C)}{\phi_X(C)}$ (using that $\e^{-x} \approx 1 -x$ for small positive $x$), while for more costly clusters the probability of hitting them is lower bounded by some constant. 
\begin{proof}
If we sample a point $c$ from $A$ according to $D^2$ weights, we have $\E[\phi_A(C\cup\{c\})] \le 8\phi_A^*$ by \cref{lem:8apx}. 
Hence, by Markov inequality, $A$ is made settled with probability at least $1 - \frac{8}{10} = \frac{1}{5}$. 
In other words, there is a subset of points $A' \subseteq A$ with $\phi_{A'}(C) \ge \phi_A(C)/5$ such that sampling a point from $A'$ makes $A$ settled. If $A'$ contains a point $x$ with $\phi_x(C) \ge \frac{\phi_X(C)}{\ell}$, we sample $x$ and make $A$ settled with probability $1$. Otherwise, we have
\begin{align*}
    \P(\text{$A$ does not get settled})
    &\le \prod_{x\in A'} \left( 1 - \ell \phi_x(C) / \phi_X(C) \right)\\
    &\le \exp(-\sum_{x \in A'} \ell \phi_x(C) / \phi_X(C) )\\
    &\le \exp(-\ell \phi_{A}(C) / (5 \phi_X(C)))
\end{align*}
where we used $1+x\le \e^{x}$ and $\phi_{A'}(C) \ge \phi_A(C)/5$. 
\end{proof}
Similarly to the classical balls into bins problem, we can now observe that the total cost of unsettled clusters drops by a constant factor in each step. 

From now on we simplify the notation and write $\phi^t_Y$ for the cost of the point set $Y$ after $t$ sampling rounds of \cref{alg:kmpipe}. Moreover, by $\phi_U^t$ we denote the total cost of yet unsettled clusters after $t$ sampling rounds. 

\begin{proposition}[roughly Theorem 2 in \cite{bahmani2012scalable}]
\label{prop:main}
Suppose that $\phi_X^{t} \ge 20\phi^*$. 
For $\ell \ge k$ we have
\[
\E\left[\phi_U^{t+1}\right] \le \left(1-\frac{1}{50}\right)\phi_U^t. 
\]
\end{proposition}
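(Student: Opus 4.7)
I would treat the first $t$ rounds as fixed, so that $C^t$, $U^t$, $\phi_X^t$, and $\phi_U^t$ are all deterministic, and bound the conditional expectation of $\phi_U^{t+1}$. Two immediate observations help: adding centers only decreases the cost, so (i) a settled cluster stays settled, giving $U^{t+1}\subseteq U^t$, and (ii) $\phi_A(C^{t+1})\le\phi_A(C^t)=\phi_A^t$ for every cluster $A$. Combined with \cref{prop:make_settled} applied to each $A\in U^t$, this yields
$$
\E[\phi_U^{t+1}]\;\le\;\sum_{A\in U^t}\phi_A^t\,\exp(-q_A), \quad\text{where } q_A:=\frac{\ell\,\phi_A^t}{5\,\phi_X^t},
$$
so it suffices to prove $\sum_{A\in U^t}\phi_A^t\,(1-e^{-q_A})\ge\phi_U^t/50$.

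The analysis then rests on two ingredients. First, a lower bound on $\phi_U^t$ itself: each settled cluster satisfies $\phi_A^t\le 10\phi_A^*$, so the total settled cost is at most $10\phi^*$, and the hypothesis $\phi_X^t\ge 20\phi^*$ forces $\phi_U^t\ge\phi_X^t/2$. Second, a light/heavy split of $U^t$ at threshold $T=\phi_X^t/(5k)$: since $|U^t|\le k$, the total cost of the \emph{light} clusters (those with $\phi_A^t\le T$) is at most $kT=\phi_X^t/5\le (2/5)\phi_U^t$, so the \emph{heavy} clusters account for at least $(3/5)\phi_U^t$. For every heavy cluster, $q_A>\ell/(25k)\ge 1/25$ by the hypothesis $\ell\ge k$, whence $1-e^{-q_A}\ge 1-e^{-1/25}\ge 1/30$ (a short estimate using $1-e^{-x}\ge x-x^2/2$). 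The heavy part of the sum therefore contributes at least $(3/5)\phi_U^t\cdot(1/30)=\phi_U^t/50$, which is exactly what we need.

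The main obstacle I expect is balancing the three constants -- the threshold $T$, the lower bound on $1-e^{-q_A}$ in the heavy case, and the resulting fraction of $\phi_U^t$ left in the heavy part -- so that they line up to give precisely the factor $1/50$ claimed in the proposition. A smaller threshold drives $q_A$ too close to zero for the heavy clusters, while a larger one leaves too much mass in the light part (and the light part gets no credit at all in this accounting). The choice $T=\phi_X^t/(5k)$ together with $\ell\ge k$ and $|U^t|\le k$ is the sweet spot that yields $(3/5)\cdot(1/30)=1/50$ cleanly; everything else in the argument is a direct application of \cref{prop:make_settled} plus the monotonicity observations (i) and (ii).
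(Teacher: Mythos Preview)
Your proposal is correct and follows essentially the same route as the paper: both use the observation $\phi_U^t \ge \phi_X^t/2$ (from $\phi_X^t \ge 20\phi^*$), split $U^t$ into heavy and light clusters, bound the total light mass by $|U^t|\le k$ times the threshold, and apply \cref{prop:make_settled} to show each heavy cluster is settled with at least constant probability. The only cosmetic difference is the choice of threshold---the paper uses $\phi_U^t/(2k)$ (giving heavy mass $\ge \phi_U^t/2$ and settling probability $\ge 1/25$) while you use $\phi_X^t/(5k)$ (giving heavy mass $\ge (3/5)\phi_U^t$ and settling probability $\ge 1/30$)---but both land on exactly $1/50$.
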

In other words,  the expected cost of yet unsettled clusters drops by a constant factor in each iteration. 
\begin{proof}
We split the unsettled clusters into two groups: a cluster $A$ with $\phi_{A}^t \ge \phi_U^t /(2k)$ we call \emph{heavy} and otherwise we call it \emph{light}. 
Note that the probability that a heavy cluster $A$ is not settled in $(t+1)$th iteration is by \cref{prop:make_settled} bounded by
\begin{align*}
\exp\left( \frac{-k\phi_{A}^t}{5\phi_X^t}\right)
\le  \exp\left( \frac{- k\phi_U^t}{10k\phi_X^t} \right)
\le  \exp\left( \frac{- 1}{20} \right)
\le \frac{24}{25}
\end{align*}
where we used that $\phi_U^t \ge \phi_X^t/2$: this holds since otherwise more than half the cost of $\phi_X^t$ is formed by settled clusters, hence $\phi_X^t < 20\phi^*$, contradicting our assumption $\phi_X^t \ge 20\phi^*$. 
Hence, after the sampling step, a heavy cluster $A$ does not contribute to the overall cost of unsettled clusters with probability at least $1/25$. 
This implies that the expected drop in the cost of unsettled clusters is at least
\begin{align*}
    \phi_U^t - \E[\phi_U^{t+1}] 
    \ge \sum_{\text{$A$ heavy}}   \phi_A^t / 25\\
    = \frac{1}{25} \left( \phi_U^t - \sum_{\text{$A$ light}}\phi_A^t \right)
    \ge \frac{\phi_U^t}{50}
\end{align*}
where we used that the light clusters have total cost of at most $k \cdot \phi_U^t / (2k) = \phi_U^t / 2$. 
\end{proof}

\cref{thm:main} now follows directly \cite{bahmani2012scalable, bachem2017distributed} and we prove it here for completeness. 


\begin{proof}[Proof of \cref{thm:main}]
From \cref{lem:2apx} it follows that after we sample a uniformly random point, we have $\E[\phi^0] \le 2 \phi_X(\mu_X)$. 
\cref{prop:main} then gives $\E[\phi_U^{t+1}] \le \frac{49}{50}\phi_U^t+ 20\phi^*$. 
Applying this result $T$ times, we get
\begin{align*}
\E[\phi^T_U] \le 2\left(\frac{49}{50}\right)^T\phi_X(\mu_X) + 20\phi^* \cdot \sum_{t=0}^{T-1} \left(\frac{49}{50}\right)^{t}\\
\le 2\left(\frac{49}{50}\right)^T\phi_X(\mu_X)
+ 1000 \phi^*
\end{align*}
Choosing $T = O(\log\frac{\phi_X(\mu_X)}{\phi^*})$ and recalling that $\phi^T \le \phi_U^T + 10\phi^*$ yields the desired claim. 
\end{proof}

\subsection{Additional Remarks}
For the sake of simplicity, we did not optimize constants and analysed \cref{alg:kmpipe} meaningfully only for the case $\ell =\Theta(k)$. 
In the following remarks we note how one can extend this (or some previous) analysis and then use it to compare \kmpipe more carefully to a recent line of work. 

\begin{remark}
With more care, the approximation factor in \cref{thm:main} can be made arbitrarily close to $8$. 
We omit the proof. 
\end{remark}

\begin{remark}
\label{rem:alpha}
With more care, for general $\ell = \alpha k$ one can prove that the number of steps of \cref{alg:kmpipe} needed to sample a set of points that induce a cost of $O(\phi')$ is $O(\log_{\alpha} \frac{\phi_X(\mu_X)}{\phi'})$ for $\alpha \ge 1$  and $O(\log \frac{\phi_X(\mu_X)}{\phi'} / \alpha)$ for $\alpha \le 1$. 
We omit the proof. 
\end{remark}

\subsection{Similar Algorithms with Additive Error}
\cref{rem:alpha} allows us to make a closer comparison of \kmpipe with a recent line of work of \cite{bachem2016fast, bachem2018scalable} that aims for very fast algorithms that allow for an additive error of $\eps \phi_X(\mu_X)$. 

According to \cref{rem:alpha}, to obtain such a guarantee for the oversampled set of centers, \cref{alg:kmpipe} needs to set $\ell = O(k)$ and sample for $t = \log(1/\eps)$ steps (this was observed in \cite{bachem2017distributed}) or it sets $\ell = O(k/\eps)$ points and sample just once (i.e., $t=1$). 
The approximation factor of \cref{alg:kmpipe2} is then multiplied by additional factor of $O(\log k)$ as this is the approximation guarantee of \kmpp. 

An approach of Bachem et al. \cite{bachem2018scalable} is similar to \kmpipe with $t=1$: there, authors propose a coreset algorithm that samples $\tilde{O}(dk/\eps^2)$ points from roughly the same distribution as the one used in the first sampling step of \cref{alg:kmpipe}. 
If we use their algorithm by running \kmpp on the provided coreset, we get an algorithm with essentially the same guarantees as \cref{alg:kmpipe2} with number of rounds $t=1$ and $\ell = O(k/\eps)$. 
The main difference is that in \cref{alg:kmpipe2}, the weight of each sampled center used by \kmpp subroutine is computed as the number of points for which the center is the closest, whereas in the coreset algorithm, each center is simply given a weight inversely proportional to the probability that the center is sampled. 
This allows the coreset algorithm to be faster than \cref{alg:kmpipe2} with $t=1$ and $\ell = O(k/\eps)$, whose time complexity in this case is $O(nk/\eps)$. This is at the expense of higher number of sampled points. 

A paper of \cite{bachem2016fast} uses the Metropolis algorithm on top of the classical \kmpp algorithm to again achieve additive $\eps \phi_X(\mu_X)$ (and multiplicative $O(\log k)$) error, while sampling only $O(\frac{k}{\eps}\log\frac{k}{\eps})$ points from the same distribution as \cref{alg:kmpipe2} with $t=1$ and $\ell=O(k/\eps)$.  
While the number of taken samples is only slightly higher than the one of \cref{alg:kmpipe2} with $t=1, \ell=O(k/\eps)$, their running time is much better $\tilde{O}(k^2 / \eps)$. 

We see that the main advantage of \kmpipe lies in the possibility of running multiple, easily distributed, sampling steps that allow us to achieve strong guarantees. 

\subsection{MPC Context and a Theoretical Implication}
Massively Parallel Computing (MPC) \cite{dean2004mapreduce, karloff2010model} is a distributed model in which the set of input points $X$ is split across several machines, each capable of storing $\tilde{O}(s)$ points for some parameter $s$. 
In each round, each machine performs some computation on its part of input and sends $\tilde{O}(s)$ bits to other machines such that each machine receives $\tilde{O}(s)$ bits in total. We want to minimize the number of rounds while keeping the memory $s$ small. 

\cref{alg:kmpipe2} with $t$ sampling steps can be implemented in $O(t)$ MPC rounds if $s = \Omega(tk)$. 
Hence, we get an $O(1)$-approximation algorithm for $k$-means in $O(\log \frac{\phi_X(\mu(X))}{\phi^*}) = O(\log n)$ rounds if each machine has $\tilde{\Theta}(k)$ memory. 

In the case of per machine memory $O(n^\alpha k)$ for constant $0 < \alpha \le 1$, we can use \kmpipe to achieve constant approximation guarantee in only $O(1/\alpha)$ rounds, as was noted by \cite{mohsen}. 
A hierarchical clustering scheme of Guha et al. \cite{guha2003streaming} in this setting needs $O(1/\alpha)$ rounds to output a set of $k$ centers $\bar{C}$, but at the expense of approximation guarantee: we have $$\phi_X(\bar{C}) = 2^{O(1/\alpha)} \phi_X(C^*).$$ 
\kmpipe can now recover this loss in approximation: if we run \cref{alg:kmpipe} but start not with the trivial $C_0 = \{x\}$ for $x$ sampled uniformly at random, but instead take $C_0 = \bar{C}$, \cref{prop:main} ensures that $$t = O(\log\frac{\phi_X(\bar{C})}{\phi_X(C^*)}) = O(\log 2^{O(1/\alpha)}) = O(1/\alpha)$$ sampling rounds suffice to get a clustering with $O(1)$-approximation guarantee.

\subsection{Submodular Context}
One can observe why $O(\log\frac{\phi_X(\mu_X)}{\phi^*})$ is a natural bound for the number of rounds by considering a different way of achieving the same round complexity. 
First, note that after sampling the first point $c_1$, we have $\E[\phi_X(c_1)] \le 2\phi_X(\mu_X) = 2\phi_X(\mu_X)$ via \cref{lem:2apx} with the set $A$ chosen as the whole set $X$. 
The process of adding new points to the solution now satisfies a natural law of diminishing returns: for any $C_1 \subseteq C_2 \subseteq X$ and $c \in X$ we have
\begin{align*}
&\phi_X(\{c_1\}\cup C_1 \cup \{c\}) - \phi_X(\{c_1\}\cup C_1)\\
&\ge \phi_X(\{c_1\}\cup C_2 \cup \{c\}) - \phi_X(\{c_1\}\cup C_2)
\end{align*}

In other words, the function $\phi_X(\{c_1\}) - \phi_X(\{c_1\} \cup C)$ is submodular (see e.g. \cite{krause2014submodular} for the collection of uses of submodularity in machine learning). 
Then one can use recent results about distributed algorithms for maximizing submodular functions (see e.g. \cite{sub6, sub4, sub3, sub2}) to get that in $O(1)$ distributed rounds, one can find a set of $k$ points $C$ such that 
$
\phi_X(\{c_1\} \cup C) - \phi^* \le \left( \phi_X(\{c_1\}) - \phi^* \right) /2,
$
i.e., the distance to the best solution drops by a constant factor.  
Continuing the same process for $\log((\phi_X(\mu_X)) / \phi^*)$ rounds, one gets the same theoretical guarantees as with running \cref{alg:kmpipe2}. 
However, the advantage of \kmpipe is its extreme simplicity and speed. 
Moreover, rather surprisingly, we prove in the next section that the asymptotical round complexity of \kmpipe is actually slightly better than logarithmic. 

\section{Sharp Analysis of \kmpipe: Upper Bound}
\label{sec:sharp_upper}

It may seem surprising that the analysis of \cref{thm:main} can be strengthened, since even for the classical balls into bins problem, where we hit each bin with constant probability, we need $O(\log k)$ rounds to hit all the bins with high probability. 
However, during our process we can disregard already settled clusters since they are not contributing substantially to the overall cost. 
If we go back to the classical balls into bins problems and let that process repeat on the same set of bins, with the additional property that in each round we throw each one of $k$ balls to a random bin \emph{out of those that are still empty}, we expect to hit all the bins in mere $O(\log^* n)$ steps \cite{lenzen2011tight}. 
Roughly speaking, this is because  of the rapid decrease in the number of bins: after the first round, the probability of a bin remain empty is roughly $\frac{1}{2}$, but after second round it is only roughly $\frac{1}{2^2}$ since the number of bins decreased, in the next iteration it is even roughly $\frac{1}{2^{2^2}}$ and so on. 
In our weighted case we cannot hope for such a rapid decrease in the number of bins, since the costs of clusters can form a geometric series, in which case we get rid of only a small number of clusters in each step (cf. \cref{sec:sharp_lower}) 
\footnote{We believe it is an interesting problem to analyze whether there are reasonable assumptions on the data under which the round complexity indeed follows $\log^* n$ behaviour (for example, this is the case if clusters are of same size and lie in vertices of a simplex). This could explain why in practice it is enough to run \cref{alg:kmpipe} only for few rounds \cite{bahmani2012scalable}.  }.

In this section we show a more careful analysis that bounds the number of necessary steps to 
\[
O\left( \log \frac{\phi_X(\mu_X) }{ \phi^*  } /  \log\log \frac{\phi_X(\mu_X)  }{ \phi^* } \right). 
\]
In the rest of the paper we use the notation $ \gamma = \phi_X(\mu_X) / \phi^*$. 

\paragraph{Concentration} The number of steps $O(\log\gamma / \log\log \gamma)$ suffice even with high probability; to prove it, we first recall the classical Chernoff bounds that are used to argue about concentration around mean. 
\begin{theorem}[Chernoff bounds]
\label{thm:chernoff}
Suppose $X_1, \dots, X_n$ are independent random variables taking values in $\{0, 1\}$. Let $X$ denote their sum. Then for any $0 \le \delta \le 1$ we have
\[
\P(X \le (1-\delta)\E[X]) \le \e^{-\E[X]\delta^2 / 2},
\]
\[
\P(X \ge (1+\delta)\E[X]) \le \e^{-\E[X]\delta^2 / 3},
\]
and for $\delta \ge 1$ we have
\[
\P(X \ge (1+\delta)\E[X]) \le \e^{-\E[X]\delta / 3}.
\]
\end{theorem}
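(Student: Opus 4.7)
The plan is to prove the three bounds by the standard Chernoff--Hoeffding technique: apply Markov's inequality to $e^{tX}$ for a well-chosen parameter $t$, use independence to factor the moment generating function, and then optimize $t$. Throughout let $\mu = \E[X]$ and $p_i = \P(X_i = 1)$.

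First I would establish the moment generating function bound. For any real $t$, independence gives $\E[e^{tX}] = \prod_i \E[e^{tX_i}] = \prod_i (1 - p_i + p_i e^t)$. Using $1 + x \le e^x$ on each factor yields
\[
\E[e^{tX}] \le \prod_i e^{p_i(e^t - 1)} = e^{\mu(e^t - 1)}.
\]
This single inequality is the engine behind all three claims.

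For the upper tail, I would apply Markov's inequality: for $t > 0$,
\[
\P(X \ge (1+\delta)\mu) \le e^{-t(1+\delta)\mu} \, \E[e^{tX}] \le \exp\!\bigl(\mu(e^t - 1) - t(1+\delta)\mu\bigr).
\]
Optimizing by setting $t = \ln(1+\delta)$ gives the classical form $\bigl(e^\delta / (1+\delta)^{1+\delta}\bigr)^\mu$. For the lower tail I would repeat the argument with $t < 0$, set $t = \ln(1-\delta)$, and obtain $\bigl(e^{-\delta} / (1-\delta)^{1-\delta}\bigr)^\mu$.

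The main obstacle, and the only nonroutine step, is to convert these exact bounds into the clean exponentials stated in the theorem. For the lower tail with $0 \le \delta \le 1$, I would verify the elementary inequality $(1-\delta)\ln(1-\delta) \ge -\delta + \delta^2/2$ via a Taylor expansion (the series $-\delta - \delta^2/2 - \delta^3/3 - \dots$ appearing in $\ln(1-\delta)$ makes every term nonnegative after the arithmetic), which yields the $e^{-\mu \delta^2/2}$ bound. For the upper tail with $0 \le \delta \le 1$, I would show $(1+\delta)\ln(1+\delta) - \delta \ge \delta^2/3$, again by comparing power series and checking the inequality holds term by term (or equivalently by a short calculus argument on $f(\delta) = (1+\delta)\ln(1+\delta) - \delta - \delta^2/3$, noting $f(0) = f'(0) = 0$ and $f''(\delta) \ge 0$ on $[0,1]$). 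Finally for $\delta \ge 1$, since $(1+\delta)\ln(1+\delta) - \delta$ is convex and exceeds $\delta/3$ at $\delta = 1$, a tangent-line / monotonicity comparison gives $(1+\delta)\ln(1+\delta) - \delta \ge \delta/3$ for all $\delta \ge 1$, delivering the $e^{-\mu\delta/3}$ bound. These elementary inequalities are where all the slack in the constants $1/2$ and $1/3$ is spent; everything else is mechanical.
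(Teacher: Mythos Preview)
The paper does not prove this theorem at all: it is stated as the ``classical Chernoff bounds'' and simply recalled without proof, so there is nothing in the paper to compare your argument against. Your proposal is the standard MGF proof and is essentially correct; the one small slip is the claim that for $f(\delta)=(1+\delta)\ln(1+\delta)-\delta-\delta^2/3$ one has $f''(\delta)\ge 0$ on all of $[0,1]$ --- in fact $f''(\delta)=\tfrac{1}{1+\delta}-\tfrac{2}{3}$ becomes negative past $\delta=\tfrac{1}{2}$, so you should instead argue via concavity of $f'$ together with $f'(0)=0$ and $f'(1)=\ln 2-\tfrac{2}{3}>0$ to conclude $f'\ge 0$ on $[0,1]$.
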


\paragraph{Intuition} In the following \cref{prop:main_fine}, a refined version of \cref{prop:main}, we argue similarly, but more carefully, about one sampling step of \cref{alg:kmpipe}. 
The difference is that we analyze not only the drop in the cost of unsettled clusters, but also the drop in the \emph{number} of unsettled clusters. 
Here is the intuition. 

Let us go back to the proof of  \cref{prop:main}, where we distinguished heavy and light clusters. Heavy clusters formed at least constant proportion of the cost of all clusters and every heavy cluster was hit with probability that we lower-bounded by $1/25$. 
For a light cluster we cannot give a good bound for the probability of hitting it, but since it is not very probable that one light cluster is hit by more than one sampled center, if we denote $\alpha = \left(\sum_{\text{$A$ light}} \phi_A^t\right) / \phi_U^t$, i.e., $\alpha$ is the proportional cost of the light clusters, we expect roughly $\alpha k$ clusters to become settled. 
On the other hand, we may, optimistically, hope that after one iteration the cost of unsettled clusters drops from $\phi^t$ to $\alpha \phi^t$, since the heavy clusters are hit with high probability. 
This is not exactly the case, since for a heavy cluster we have only a constant probability of hitting it. 
But we can consider two cases: either there are lot of heavy clusters and we again make $\alpha k$ clusters settled, or their cost is dominated by few, \emph{massive} clusters, each of which is not settled only with exponentially small probability and, hence, we expect the cost to drop by $\alpha$ factor. 

The tradeoff between the behaviour of light and heavy clusters then yields a threshold $\alpha \approx 1/\poly\log\gamma$ that balances the drop of the cost and of the number of unsettled clusters. 


\begin{proposition}
\label{prop:main_fine}
Suppose that after $t$ steps of \cref{alg:kmpipe} for $\ell=k$ there are $k_t$ unsettled clusters and their total cost is $\phi^t_U$. 
Assume that $\phi_X^t \ge 20\phi^*$. 
After the next sampling step, with probability at least $1 - \exp(\Theta(k^{0.1}))$, we have that either the number of unsettled clusters decreased by at least $k/(40\sqrt{\log\gamma})$, or the total cost of unsettled clusters decreased from $\phi_U^t$ to at most $4\phi_U^t/\sqrt[3]{\log\gamma}$. 
\end{proposition}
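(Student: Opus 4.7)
My plan is to formalize the author's intuition by partitioning unsettled clusters via the scaled cost $s_A := k\phi_A^t/\phi_U^t$, for which $\sum_{A \text{ unsettled}} s_A = k$. The assumption $\phi_X^t \ge 20\phi^*$ forces $\phi_X^t \le 2\phi_U^t$ exactly as in the proof of \cref{prop:main}, so by \cref{prop:make_settled} we have $\P(A \text{ survives}) \le e^{-s_A/10}$. I introduce two thresholds, $\alpha := 1/\sqrt{\log\gamma}$ and $\beta := k^{0.1}$, and split clusters into \emph{Light} ($s_A \le \alpha$), \emph{Moderate} ($\alpha < s_A < \beta$), and \emph{Massive} ($s_A \ge \beta$). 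The Light tier deterministically contributes at most $\alpha\phi_U^t = \phi_U^t/\sqrt{\log\gamma}$ to the residual cost, already beneath the target. There are at most $k/\beta = k^{0.9}$ Massive clusters (since $\sum s_A = k$), each surviving with probability at most $e^{-k^{0.1}/10}$, so a union bound eliminates all Massive except with probability $e^{-\Theta(k^{0.1})}$.

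For the Moderate tier, the key tool is the elementary pointwise inequality $1 - e^{-s/10} \ge (s/10)e^{-s/10}$, which holds because both sides vanish at $s=0$ and the difference has derivative $(s/100)e^{-s/10} \ge 0$. Summing this over Moderate clusters yields the clean tradeoff
\[
\E[N_s^M] \;\ge\; \frac{k}{10\phi_U^t}\,\E[\Phi_r^M],
\]
where $N_s^M$ counts the Moderate clusters that get settled and $\Phi_r^M$ is the cost that remains among Moderate. Hence either (i) $\E[N_s^M] \ge k/(20\sqrt{\log\gamma})$, in which case \cref{thm:chernoff} with $\delta=1/2$ gives $N_s^M \ge k/(40\sqrt{\log\gamma})$ with failure probability $e^{-\Omega(k/\sqrt{\log\gamma})}$, proving (a); or (ii) $\E[\Phi_r^M] < \phi_U^t/(2\sqrt{\log\gamma})$, which combined with the Light and Massive bounds puts the expected residual cost comfortably below $4\phi_U^t/\sqrt[3]{\log\gamma}$, heading toward (b).

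The main obstacle is promoting the expectation bound of case (ii) to the required high-probability bound. This is where the Massive cutoff at $\beta=k^{0.1}$ earns its keep: each Moderate cluster contributes at most $\beta\phi_U^t/k = \phi_U^t/k^{0.9}$ to $\Phi_r^M$, so $\Phi_r^M$ is a sum of independent bounded random variables (independence comes from the disjoint point-sets of distinct optimal clusters together with the independent per-point sampling in \cref{alg:kmpipe}), and Bernstein's inequality yields a deviation bound with exponent $-\Omega(k^{0.9}/\sqrt[3]{\log\gamma})$. This dominates $k^{0.1}$ in the regime of interest since $\log\gamma = O(\log n)$. Matching the precise constants $40$ and $4$ stated in the proposition is a matter of tuning $\alpha$, $\beta$, and the Chernoff/Bernstein deviation parameters, with no further conceptual ingredients required.
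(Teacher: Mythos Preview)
Your argument is correct but genuinely different from the paper's. The paper splits at $s_A = 1$ into \emph{heavy}/\emph{light} and then runs three cases depending on the proportional light cost $\alpha$ and the number of heavy clusters; only in its third case does it introduce a further \emph{massive} threshold at $s_A = \sqrt[10]{\log\gamma}$, and there it controls the residual massive cost by a stochastic-domination trick (chopping each massive cluster into $\lceil \phi_A^t/\zeta\rceil$ Bernoulli pieces) so that the plain Chernoff bound of \cref{thm:chernoff} applies. Your decomposition instead places the thresholds at $s_A = 1/\sqrt{\log\gamma}$ and $s_A = k^{0.1}$: the lower cut makes the Light cost \emph{deterministically} small (so the paper's Case~1 disappears), the upper cut lets a union bound kill Massive outright, and the single inequality $\E[N_s^M] \ge \frac{k}{10\phi_U^t}\E[\Phi_r^M]$ replaces the paper's separate Cases~1--3 by one dichotomy. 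The price is that you need Bernstein rather than the $\{0,1\}$-Chernoff stated in the paper; the gain is a shorter and more uniform treatment of the middle tier. Two minor points worth tightening: (a) the survival events for different clusters are not literally independent (a sampled point outside $A$ can settle $A$), but the events ``some point of $A'\subseteq A$ is sampled'' from \cref{prop:make_settled} are, so $\Phi_r^M$ is stochastically dominated by an independent sum and Bernstein applies to that; (b) your claim that $k^{0.9}/\sqrt[3]{\log\gamma}$ dominates $k^{0.1}$ uses $k \ge \log\gamma/\log\log\gamma$, which is the same implicit assumption the paper invokes in its Cases~1 and~2.
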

\begin{proof}
Note that $\phi_X^t \ge 20\phi^*$ implies $\phi_U^t \ge \phi_X^t / 2$, since otherwise, more than half the cost of $\phi_X^t$ would be formed by settled clusters and, hence, $\phi_X^t < 20\phi^*$, a contradiction. 

We will say that an unsettled cluster is \emph{heavy} if its cost is at least $\phi_A^t \ge \phi_U^t / k$ and \emph{light} otherwise. 
Let $\alpha = (\sum_{\text{$A$ light}} \phi^t_A) / \phi^t_U$, i.e., the proportional cost of the light clusters. 
We will distinguish three possible cases. 

\begin{enumerate}
    \item $\alpha \ge 1 / \sqrt{\log\gamma }$,
    \item $\alpha < 1 / \sqrt{\log\gamma }$ and there are at least $k / \sqrt{\log\gamma }$ heavy clusters,
    \item $\alpha < 1 / \sqrt{\log\gamma }$ and there are less than $k / \sqrt{\log\gamma }$ heavy clusters. 
\end{enumerate}
For each case we now prove that with probability $1 - \exp(\Omega(-k^{0.1}))$ we either settle at least $k/(40\sqrt{\log\gamma})$ clusters or the total cost of unsettled clusters drops from $\phi_U^t$ to $4\phi_U^t/\sqrt[3]{\log\gamma}$. 

\begin{enumerate}
    \item 
    By \cref{prop:make_settled}, each light cluster $A$ gets settled with probability at least 
    \[
    1 - \e^{-k \phi^t_{A} / (5\phi_X^t)} \ge (k \phi^t_{A}) / (20\phi_U^t),
    \]
    using $\phi^t_U \ge \phi_X^t/2$ and $\e^{-x} \le 1 + x/2$ for $0 \le x \le 1$. 
If we define $X_A$ to be an indicator of whether a light cluster $A$ got settled in this iteration and $X = \sum_{\text{$A$ light}} X_A$, we have
\begin{align*}
\E[X] = 
\sum_{\text{$A$ light}} \E[X_A]
\ge \sum_{\text{$A$ light}} \frac{k \phi_A^t}{20 \phi_U^t}\\
= \alpha k  / 20
\ge \frac{k}{20\sqrt{\log\gamma}}
\end{align*}
where we used our assumption on $\alpha$. 
Invoking the first bound of \cref{thm:chernoff}, we get 
\[
\P(X \le \E[X]/2) \le \e^{-\E[X] / 8} \le \e^{-\Theta(k^{0.1})},
\]
using that $k \ge \lg\gamma/\lg\lg\gamma$. 


\item 
We proceed analogously to the previous case. 
By \cref{prop:make_settled} we get that each heavy cluster gets settled with probability at least $$1 - \e^{-k\phi^t_A / (5\phi_X^t)} \ge 1 - \e^{-1/10} \ge \frac{1}{20},$$ using $\phi^t_U \ge \phi_X^t/2$ and the definition of heavy cluster. 

We define $X_A$ to be an indicator of whether a heavy cluster $A$ got settled in this iteration and $X = \sum_{\text{$A$ heavy}} X_A$. We have
\[
\E[X] = \sum_{\text{$A$ heavy}} \E[X_A]
\ge k / \sqrt{\log\gamma } \cdot \frac{1}{20}
= \frac{k}{20 \sqrt{\log\gamma } }.
\]
Invoking the first bound of \cref{thm:chernoff}, we get 
\[
\P(X \le \E[X]/2) \le \e^{-\E[X] / 8} \le \e^{-\Theta(k^{0.1})},
\]
using that $k \ge \lg\gamma/\lg\lg\gamma$. 

\item 
Let $\zeta = \sqrt[10]{\lg \gamma}\frac{\phi_U^t}{k}$. 
We call a heavy cluster \emph{massive} if its cost is at least $\zeta$. Since we know that there are at most $\frac{k}{\sqrt{\log\gamma }}$ heavy clusters, the total cost of clusters that are heavy but not massive is at most
\[
\frac{k}{\sqrt{\log\gamma }} \cdot  \sqrt[10]{\lg\gamma }\frac{\phi^t_U}{k}
\le \frac{\phi^t_U}{\sqrt[3]{\log\gamma }}
\]
Hence, the total contribution of clusters that are not massive is at most 
\[
\frac{\phi_U^t}{\sqrt[3]{\log\gamma }} + \alpha\phi_U^t 
\le \phi_U^t ( \frac{1}{\sqrt[3]{\log\gamma }} + \frac{1}{\sqrt{\log\gamma }}) 
\le \frac{2\phi_U^t}{\sqrt[3]{\log\gamma }}. 
\]


By \cref{prop:make_settled} each massive cluster is not settled with probability at most $ \e^{-k\phi_A^t/(5\phi_X^t)} $. 
Define the random variable $X_A$ to be equal to $0$ if a massive cluster $A$ gets settled in this iteration and $\phi_A^t / \zeta$ otherwise. 
Let $X = \sum_{\text{$A$ massive}} X_A$. 
Note that expected cost of massive clusters that are not settled in this iteration is bounded by $\E[X] \cdot \zeta$. 

The value of $X$ is stochastically dominated by the value of a variable $X'$ defined as follows. 
We first replace each $X_A$ by some variables $X'_{A, 1}, \dots, X'_{A, \lceil \phi^t_A/\zeta\rceil}$, each new variable $X'_{A, i}$ being equal to $1$ with probability $\e^{-k \zeta / (10\phi_X^t)}$ and zero otherwise, independently on the other variables $X'_{A,j}$. 
Note that the sum $X'_A = X'_{A, 1} + \dots + X'_{A, \lceil \phi^t_A/\zeta\rceil}$ stochastically dominates the value of $X_A$, since it attends the value $\lceil \frac{\phi_A^t}{\zeta} \rceil \ge \frac{\phi_A^t}{\zeta}$ with probability 
\begin{align*}
\prod_{i=1}^{\lceil \phi^t_A / \zeta \rceil} \e^{-k\zeta / (10\phi_X^t)} \ge \left( \e^{-k\zeta / (10\phi_X^t)} \right)^{ 2\phi^t_A / \zeta} \\
= \e^{-k\phi_A^t/(5\phi_X^t)},
\end{align*}
and otherwise is nonnegative. 
Hence, the value $X' = \sum X'_A$ stochastically dominates the value $X$. 

Since the number of variables $X'_{A,i}$ is $\sum_A \lceil \frac{\phi_A^t}{\zeta} \rceil \le \frac{2\phi_X^t}{\zeta}$ we have 
\begin{align*}
\E[X] &\le \E[X'] 
\le \frac{2\phi_X^t}{\zeta} \cdot \exp(-k\zeta /(10\phi_X^t))\\
&\le \frac{4\phi_U^t}{\zeta} \cdot \exp(\frac{-k \sqrt[10]{\log\gamma}\phi^t_U}{20k\phi_U^t} )\\
&=\frac{4\phi_U^t}{\sqrt[10]{\log\gamma}\phi^t_U / k} \cdot \exp(\frac{-\sqrt[10]{\log\gamma}}{20} )
\le \frac{k}{\sqrt{\log\gamma}}
\end{align*}
where the last step assumes $\gamma$ large enough. 
Finally, we bound
\begin{align*}
&\P(X \ge 2k / \sqrt{\log \gamma}) 
\le \P(X' \ge 2k / \sqrt{\log \gamma})\\
&= \P(X' \ge (1+\delta)\E[X'])
\end{align*}
for $\delta = \frac{2k }{ \sqrt{\log \gamma} \E[X']} - 1 \ge \frac{k }{ \sqrt{\log \gamma} \E[X']} \ge 1$ by above bound on $\E[X']$. 
Applying the third bound in \cref{thm:chernoff}, we conclude that
\begin{align*}
&\P(X \ge 2k / \sqrt{\log \gamma}) 
\le \exp(-\E[X']\delta/3)\\
&\le \exp(- k/(3\sqrt{\log \gamma}))
\le \e^{-\Omega(k^{0.1})}. 
\end{align*}
Hence, with probability $1 - \e^{-\Omega(k^{0.1})}$ the total cost of clusters that remain unsettled after this iteration is bounded by
\begin{align*}
     \frac{2\phi_U^t}{\sqrt[3]{\log\gamma }} + \frac{2k}{\sqrt{\log\gamma}} \cdot \zeta 
     \le \frac{4\phi_U^t}{\sqrt[3]{\log\gamma }}
\end{align*}
\end{enumerate}
\end{proof}

\begin{theorem}
\label{thm:main_fine}
For $\ell=k$, \cref{alg:kmpipe} achieves a constant approximation ratio for the number of sampling steps $t = O(\min(k, \frac{\lg\gamma}{\lg\lg\gamma}))$ steps with probability $1 - \exp(\Omega(k^{0.1}))$.  
\end{theorem}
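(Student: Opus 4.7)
The plan is to iteratively apply \cref{prop:main_fine}, tracking two potentials simultaneously: the number of unsettled clusters $k_t$ and the total unsettled cost $\phi_U^t$. Call a round \emph{good} if the conclusion of \cref{prop:main_fine} holds, so that either the number of unsettled clusters drops by at least $k/(40\sqrt{\log\gamma})$ (a \emph{count-drop} round), or the unsettled cost shrinks by a factor of at least $\sqrt[3]{\log\gamma}/4$ (a \emph{cost-drop} round). By \cref{lem:2apx} and Markov's inequality applied to the initial uniform sample, we may condition on the event $\phi_X^0 \le 4\phi_X(\mu_X)$, which costs only a constant factor in the overall success probability.

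First, I would bound separately how many good rounds of each type can occur before reaching $\phi_X^t \le 20\phi^*$ (the regime in which \cref{prop:main_fine} ceases to apply but in which we are already done). Since the number of unsettled clusters is always at most $k$, there can be at most $40\sqrt{\log\gamma}$ count-drop rounds. Since the unsettled cost starts at $O(\phi_X(\mu_X))$ and must reach $O(\phi^*)$, and each cost-drop round shrinks it multiplicatively by $\Omega(\sqrt[3]{\log\gamma})$, the number of cost-drop rounds is at most
\[
\frac{\log(\phi_X(\mu_X)/\phi^*)}{\log(\sqrt[3]{\log\gamma}/4)} = \Theta\!\left(\frac{\log\gamma}{\log\log\gamma}\right).
\]
Summing, any execution in which every round is good reaches $\phi_X^t \le 20\phi^*$ within $T = O(\log\gamma/\log\log\gamma)$ rounds.

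Second, I would apply a union bound over all $T$ rounds. Each round is good with probability $1 - \exp(-\Omega(k^{0.1}))$, and $T \le O(k)$ (consistent with the $\min$ in the statement and with the assumption $k \ge \log\gamma/\log\log\gamma$ used inside \cref{prop:main_fine}), so the overall failure probability is at most $T \exp(-\Omega(k^{0.1})) = \exp(-\Omega(k^{0.1}))$. Once $\phi_X^T \le 20\phi^*$, the sampled set $C$ is a constant-factor approximation, because the settled clusters contribute at most $10\phi^*$ to $\phi_X(C)$ and the unsettled clusters contribute at most $\phi_X^T \le 20\phi^*$.

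The main obstacle is that in each good round only one of the two guarantees (count-drop or cost-drop) is promised and we cannot predict in advance which; the bookkeeping above handles this by bounding the two kinds of rounds independently and summing. A secondary technical point is the initial-cost bound, which \cref{lem:2apx} supplies only in expectation but which we upgrade to a high-probability event via Markov at the cost of a constant factor in the overall success probability. The $\min(k,\cdot)$ in the statement reflects the assumption $k \ge \log\gamma/\log\log\gamma$ used inside the proof of \cref{prop:main_fine}; in the complementary regime $k < \log\gamma/\log\log\gamma$ the bound $t = O(k)$ follows more directly from \cref{prop:main}.
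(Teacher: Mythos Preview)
Your core bookkeeping (splitting good rounds into count-drop and cost-drop, bounding each type separately, then summing and union bounding) is exactly the paper's argument. However, two of the auxiliary steps you propose do not deliver the probability guarantee claimed in the theorem.

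First, the initial-cost step. You invoke \cref{lem:2apx} and Markov to condition on $\phi_X^0 \le 4\phi_X(\mu_X)$, saying this ``costs only a constant factor in the overall success probability.'' But the theorem asserts success with probability $1 - \exp(-\Omega(k^{0.1}))$; losing a constant probability mass here (say $1/2$) destroys that bound entirely. The paper avoids this by not trying to control $\phi_X^0$ at all: whatever point $c_0$ is sampled uniformly, it applies \cref{prop:make_settled} with $A = X$ to the \emph{first sampling round}, concluding that $\phi_X^1 \le 10\phi_X(\mu_X)$ with probability $1 - \exp(-\Omega(k))$. That is, the high-probability initial bound comes from the concentration of sampling $\Theta(k)$ points in round~1, not from the single uniform draw.

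Second, your treatment of the $O(k)$ part of the $\min$. You defer this to \cref{prop:main}, but \cref{prop:main} is a statement about \emph{expected} cost decay and gives no high-probability round count; in particular it does not say that after $O(k)$ rounds you are done with probability $1 - \exp(-\Omega(k^{0.1}))$. The paper's argument here is direct: as long as $\phi_X^t \ge 20\phi^*$, summing the exponent in \cref{prop:make_settled} over all unsettled clusters shows that at least one cluster is settled in each round with probability $1 - \exp(-k/10)$, so a union bound over $k$ rounds gives termination in at most $k$ steps with probability $1 - \exp(-\Omega(k))$.
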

\begin{proof}
To see that the number of steps is bounded by $k$ with high probability, note that by \cref{prop:make_settled} each unsettled cluster is made settled with probability at least $1 - \e^{-k\phi_A^t/(5\phi_X^t)}$. So, unless $\phi_X^t \le 20\phi^*$, we have $\phi_U^t \ge \phi_X^t/2$ and, hence, with probability at least 
\[
1 - \prod_{\text{$A$ unsettled}} \exp(-k\phi_A^t/10\phi_U^t) =
1 - \exp(-k/10)
\]
we make at least one cluster settled. Union bounding over first $k$ steps of the algorithm, we conclude that the algorithm finishes in at most $k$ steps with probability $1 - \exp(-\Omega(k))$. 

Whatever point $c_0$ is taken uniformly at the beginning of \cref{alg:kmpipe}, for the next iteration we invoke \cref{prop:make_settled} with $A=X$ (in its formulation we say that $A=X$ is unsettled cluster) to conclude that with probability $1 -  \exp(-\Omega(k))$ we have $\phi_X^1 \le 10\phi_X(\mu_X)$. 



We invoke \cref{prop:main_fine} and union bound over $k$ subsequent iterations of the algorithm to conclude that with probability $1 - \exp(\Omega(k^{0.1}))$, in each sampling step of \cref{alg:kmpipe} we either make at least $k/(40\sqrt{\log\gamma })$ clusters settled or the cost of unsettled clusters decreases by a factor of $4/\sqrt[3]{\log\gamma }$. 
The first case can happen at most $O(\sqrt{\lg\gamma}) = O(\lg\gamma / \lg\lg\gamma)$ times , whereas the second case can happen at most $O(\lg_{\sqrt[3]{\log\gamma }/4} \gamma) = O(\lg\gamma / \lg\lg\gamma)$ times, until we have $\phi^t_X \le 20\phi^*$. Hence, the algorithm achieves constant approximation ratio after $O(\lg\gamma/\lg\lg\gamma)$ steps, with probability $1 - \exp(\Omega(k^{0.1}))$. 
\end{proof}
\begin{remark}
The high probability guarantee in \cref{thm:main_fine} can be made $1 - \exp(\Omega(k^{1-o(1)}))$. We omit the proof. 
\end{remark}
\section{Sharp Analysis of \kmpipe: Lower Bound}
\label{sec:sharp_lower}
In this section we show that the upper bound of $O(\log\gamma / \log\log\gamma)$ steps is the best possible. 
Note that \cite{bachem2017distributed} proved that for $\ell=k$ there is a dataset $X$ such that $\E[\phi^t_X] \ge \frac14 (4kt)^{-t} \Var(X)$ for $t < k-1$. Hence, for $k = O(\poly(\log\gamma))$ we conclude that for the number of steps $T$ necessary to achieve constant approximation we have $(T \poly(\log\gamma))^T \ge \gamma$, implying $T = \Omega(\lg\gamma / \lg\lg\gamma)$. We complement their result by showing that the same lower bound also holds for $k = \Omega(\poly(\log\gamma))$. 

In the following theorem we construct an instance where the distance of two different data points can be zero, but it is easy to generalize the result for the case where we have the distance between two different points lower bounded by $1$. 

\begin{theorem}
For any function $f$ with $f(x) = \Omega(\log^{10} x), f(x) = O(x^{0.9})$, there is a dataset $X$ with $|X| = \Theta(k)$ and $k = \Theta(f( \phi_X(\mu_X)))$ such that $\phi^* = 0$ and with probability arbitrarily close to $1$ \cref{alg:kmpipe} needs $\Omega(\lg(\phi_X(\mu_X)) / \lg\lg(\phi_X(\mu_X)))$ iterations to achieve cost zero. 
\end{theorem}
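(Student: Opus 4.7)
The plan is to generalize the hard-instance construction of Bachem et al.\ to the regime $k = \Omega(\poly\log\gamma)$ that is left uncovered by their $(4kt)^{-t}\Var(X)/4$ bound. My instance will be built from two parts: a ``background'' of $M = \Theta(\log\gamma/\log\log\gamma)$ heavy cluster centers arranged at geometrically decreasing distance scales on orthogonal directions, each with multiplicity chosen so that $|X| = \Theta(k)$, and a small (at most constant) number of ``tiny'' distinct centers placed at unit distance from a designated anchor. Concretely, heavy center $p_j$ will sit at $R_j e_j = B^{M-j} e_j$ in an orthonormal basis with an appropriately chosen base $B$, with multiplicity $m_j = \Theta(k/M)$. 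The free parameters $M$, $B$, and the tiny multiplicities are tuned so that $\phi_X(\mu_X) = \gamma$ and $k = \Theta(f(\gamma))$ are simultaneously satisfied; the admissible range $\log^{10}x \le f(x) \le x^{0.9}$ provides the slack needed to do this across all admissible $f$. Since the number of distinct centers is at most $k$, we have $\phi^* = 0$.

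The key technical content is the per-round analysis. For the heavy centers I would show that, with probability at least $1 - \exp(-\Omega(k^{0.1}))$, only the top few unsettled centers (those whose per-copy sampling probability exceeds $1/m_j$) become settled in that round; the rest are protected by a Chernoff lower-tail bound on the number of copies sampled, exactly in the style of the proof of \cref{prop:main_fine}, only with the roles of ``hit'' and ``not hit'' reversed. For the tiny centers, I estimate the per-copy probability $q_* = \ell d_*^2/\phi_X^{(t)}$ and observe that, as long as the background cost has not collapsed, $\sum_t q_*$ over the $T$ rounds is bounded by a small constant, so each tiny center survives all $T$ rounds with probability bounded away from $0$.

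A union bound over the $T$ rounds and the (constant number of) independent tiny centers then turns the constant-per-center survival probability into a probability arbitrarily close to $1$ that at least one tiny center is still unsettled, and hence that $\phi^T > 0$ (recall that because $\phi^* = 0$, achieving cost zero requires every distinct center to appear in $C$). The per-round failure probability $\exp(-\Omega(k^{0.1}))$ from the Chernoff step is subpolynomial in $\gamma$ by the assumption $k \ge \log^{10}\gamma$, so the union bound over $T = O(\log\gamma)$ rounds is affordable.

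The principal obstacle is parameter juggling: the base $B$, the per-level multiplicities, and the placement of tiny centers must be chosen so that the instance satisfies both $|X| = \Theta(k)$ and $\phi_X(\mu_X) = \gamma$ while making both the cost-drop rate (mirroring case 3 of \cref{prop:main_fine}) and the cluster-settle rate (mirroring case 1) simultaneously tight throughout the admissible range. At one extreme, $f(x) \approx \log^{10}x$, the construction reduces essentially to Bachem et al.'s with $m_j = O(1)$; at the other extreme, $f(x) \approx x^{0.9}$, the multiplicities must grow polynomially with $\gamma$ and the geometric progression must be very shallow (ratio $B = \poly\log\gamma$), and one must verify that the tiny clusters still survive when the background cost drops geometrically in $\poly\log\gamma$ per round.
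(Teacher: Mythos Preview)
Your construction has a genuine gap that breaks exactly in the regime the theorem is meant to cover. You place the $M=\Theta(\log\gamma/\log\log\gamma)$ heavy centers at \emph{distinct} locations, each with multiplicity $m_j=\Theta(k/M)$. But then a single sampled copy of $p_j$ collapses the entire contribution of level $j$ to zero. In round $t$, if level $j_0$ is the current top unsettled level, the per-copy sampling probability at level $j>j_0$ is $\Theta(M)B^{-2(j-j_0)}$, so the expected number of copies of $p_j$ sampled is $\Theta(k)B^{-2(j-j_0)}$. Hence every level with $j-j_0\lesssim \log k/(2\log B)$ is settled w.h.p.\ in that round, and the background cost drops by a factor $\approx k$ per round rather than by $B^2=\poly\log\gamma$. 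The number of rounds before the background reaches $kd_*^2$ is therefore only $O(\log\gamma/\log k)$, which is $O(1)$ when $k=\gamma^{\Theta(1)}$. (Choosing $B\ge\sqrt{k}$ to force one level per round instead forces $M\le\log\gamma/\log k$, giving the same bound.) Also, the ``Chernoff lower-tail bound on the number of copies sampled'' does not protect an unsettled center: you need \emph{zero} copies sampled, and once the mean $m_jq_j$ exceeds a constant no tail bound will give that.

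The paper's construction avoids this by \emph{not} using multiplicities at the heavy levels: it places $k/T$ \emph{distinct} points on orthogonal axes at each of the $T$ distance scales $L^{T-i+1}$, so that ``settling'' level $i$ requires sampling \emph{all} $k/T$ of its points, not just one. The proof then tracks, by induction on $t$, the fraction of points at each level $i>t$ that have been sampled, showing it is at most $(2/3)^{i-t}\cdot k/T$ with probability $1-t/\poly(k)$. This guarantees that at least $k/(3T)$ points survive at level $t+1$, keeping the background cost at $\Theta((k/T)L^{2(T-t)})$, so the per-point sampling probability at level $i>t$ stays below $L^{-(i-t)}$ regardless of how large $k$ is. The $\Theta(k)$ distinct background points, rather than a handful of tiny sentinel centers, are what absorb the $\ell=k$ samples per round without any single level collapsing; this is precisely the idea your multiplicity construction lacks.
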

\begin{proof}
First we describe the dataset $X$.
We place $|X|-k+1$ points $x_{0,j}$ for $1 \le j \le |X|-k+1$ to the origin, i.e., $x_{0,j} = \textbf{0}$. We choose $|X|=\Theta(k)$ to be of such size that we know that with probability at least $1-\eps$, for a given constant $\eps$, the first uniformly chosen center is $\textbf{0}$. 

For each one of the remaining $k-1$ points we consider a new axis orthogonal to the remaining $k-2$ axes and place the point on this axis. For $x_{i,j}$, $1 \le i \le T$, $1 \le j \le k/T$, we set $\Vert x_{i,j} - \textbf{0} \Vert^2 =  L^{2(T-i+1)}$, for some large enough $L$ and $T = \Theta(L / \log L)$ with the multiplicative constant chosen in such a way that for $\phi_X(\mu_X) = \Theta(\frac{k}{T} \cdot L^{2T})$ we have $\phi_X(\mu_X) = \Theta(k \cdot 2^L)$. Note that we need to choose $L = \Omega(\poly\log k)$ to achieve $k = \Theta(f(\phi_X(\mu_X))) = O((\phi_X(\mu_X))^{0.9})$. 
We define $\phi_i^t = \sum_{j=1}^{k/T} \phi^{t}_{x_{i,j}}$. 

Conditioning on the first uniformly taken point being $x_{0,j}$ for some $j$, we prove by induction for $0 \le t \le T$ that with probability at least $1 - t \cdot 1/\poly(k)$, after $t$ sampling steps of the algorithm we have for each $i > t$ that out of $k/T$ points $x_{i,j}$, for some $j$, at most 
\[
\frac{k}{T} \cdot \left(\frac23\right)^{i-t}
\]
of them have been sampled as centers. 
This will prove the theorem, since it implies that with probability at least $1 - 1/\poly(k)$, after $t=T=\Theta(L/\log L)$ steps the cost $\phi^{t}$ is still nonzero; since we have $k = O((\phi_X(\mu_X))^{0.9})$, we have
\begin{align*}
\Theta(L/\log L) = \Theta(\lg \frac{\phi_X(\mu_X)}{k} / \lg\lg\frac{\phi_X(\mu_X)}{k})\\
= \Theta(\lg \phi_X(\mu_X) / \lg\lg\phi_X(\mu_X)).
\end{align*}

For $t = 0$ the claim we are proving is clearly true. 
For $t \ge 1$ note that by induction with probability at least $1 - (t-1)/\poly(k)$ we have that at least 
\[
k/T - k/T \left(\frac23\right)^{} = k/(3T)
\]
of the points $x_{t,j}$, for some $j$, were not sampled as centers. 
Hence, $\phi_X^t \ge \frac{k}{3T} L^{2(T-t+1)}$ and the probability that each point $x_{i,j}$, $i > t$, is being sampled is bounded by 
\[
\frac{k \cdot L^{2(T-i+1)}}{k L^{2(T-t+1)}/(3T)} 
= \frac{3T}{L^{2(i-t)}}
\le \frac{1}{L^{i-t}}
\]
for $L$ large enough. 
Hence, for any $i > t$, the expected number of points $x_{i,j}$ that are being hit is bounded by $k / (T\cdot L^{i-t})$. To get concentration around this value for given $i$, consider two cases. 
\begin{enumerate}
    \item $\frac{k}{3^{i-t}T} \ge k^{0.1}$. Then, by the third bound of \cref{thm:chernoff} we can bound the probability of taking more than $\frac{k}{3^{i-t}T}$ clusters by $\exp(-\Theta(\frac{k}{3^{i-t}T})) \le \exp(-k^{0.1}) \le 1 / \poly(k)$. 
    \item $\frac{k}{3^{i-t}T} < k^{0.1}$. Using the assumption $k=\Omega(\log^{10}\phi_X(\mu_X))$, hence $k=\Omega(L^{10})$, we have $3^{i-t} > k^{0.9}/T \ge k^{0.7}$. For $L$ sufficiently large we then have $L^{i-t} \ge \poly(k)$ for any fixed polynomial, hence, the expected number of hits is at most $k / (T\cdot L^{i-t}) \le 1/\poly(k)$, hence, with probability at least $1 - 1/\poly(k)$ there is no hit. 
\end{enumerate}
In both cases, conditioning on an event of probability $1 - 1/\poly(k)$, for any $i > t$ the number of points $x_{i,j}$ that were sampled as centers is with probability at least $1 - t/\poly(k)$ bounded by 
\[
\frac{k}{T} \left( \left(\frac23\right)^{i-t + 1} + \left(\frac13\right)^{i-t} \right)
\le \frac{k}{T} \cdot \left(\frac23\right)^{i-t} 
\]
as needed. 
\end{proof}

\section{Acknowledgement}

I would like to thank my advisor Mohsen Ghaffari for his very useful insights and suggestions regarding the paper. 
I also thank Davin Choo, Danil Koževnikov, Christoph Grunau, Andreas Krause, and Julian Portmann for useful discussions and anonymous referees for helpful comments.  

This project has received funding from the European Research Council (ERC) under the European Union’s Horizon 2020 research and innovation programme (grant agreement No. 853109)

\bibliography{ref}

\begin{thebibliography}{34}
\providecommand{\natexlab}[1]{#1}
\providecommand{\url}[1]{\texttt{#1}}
\expandafter\ifx\csname urlstyle\endcsname\relax
  \providecommand{\doi}[1]{doi: #1}\else
  \providecommand{\doi}{doi: \begingroup \urlstyle{rm}\Url}\fi

\bibitem[Aggarwal et~al.(2009)Aggarwal, Deshpande, and
  Kannan]{aggarwal2009adaptive}
Aggarwal, A., Deshpande, A., and Kannan, R.
\newblock Adaptive sampling for k-means clustering.
\newblock In \emph{Approximation, Randomization, and Combinatorial
  Optimization. Algorithms and Techniques}, pp.\  15--28. Springer, 2009.

\bibitem[Ailon et~al.(2009)Ailon, Jaiswal, and Monteleoni]{ailon2009streaming}
Ailon, N., Jaiswal, R., and Monteleoni, C.
\newblock Streaming k-means approximation.
\newblock In \emph{Advances in neural information processing systems}, pp.\
  10--18, 2009.

\bibitem[Aloise et~al.(2009)Aloise, Deshpande, Hansen, and Popat]{Aloise2009}
Aloise, D., Deshpande, A., Hansen, P., and Popat, P.
\newblock Np-hardness of euclidean sum-of-squares clustering.
\newblock \emph{Machine Learning}, 75\penalty0 (2):\penalty0 245--248, May
  2009.
\newblock ISSN 1573-0565.
\newblock \doi{10.1007/s10994-009-5103-0}.
\newblock URL \url{https://doi.org/10.1007/s10994-009-5103-0}.

\bibitem[Arthur \& Vassilvitskii(2007{\natexlab{a}})Arthur and
  Vassilvitskii]{Arthur2007}
Arthur, D. and Vassilvitskii, S.
\newblock K-means++: The advantages of careful seeding.
\newblock In \emph{Proceedings of the Eighteenth Annual ACM-SIAM Symposium on
  Discrete Algorithms}, SODA '07, pp.\  1027--1035, Philadelphia, PA, USA,
  2007{\natexlab{a}}. Society for Industrial and Applied Mathematics.
\newblock ISBN 978-0-898716-24-5.
\newblock URL \url{http://dl.acm.org/citation.cfm?id=1283383.1283494}.

\bibitem[Arthur \& Vassilvitskii(2007{\natexlab{b}})Arthur and
  Vassilvitskii]{arthur2007k}
Arthur, D. and Vassilvitskii, S.
\newblock k-means++: The advantages of careful seeding.
\newblock In \emph{Proceedings of the eighteenth annual ACM-SIAM symposium on
  Discrete algorithms}, pp.\  1027--1035. Society for Industrial and Applied
  Mathematics, 2007{\natexlab{b}}.

\bibitem[Awasthi et~al.(2015)Awasthi, Charikar, Krishnaswamy, and
  Sinop]{awasthi2015hardness}
Awasthi, P., Charikar, M., Krishnaswamy, R., and Sinop, A.~K.
\newblock The hardness of approximation of euclidean k-means.
\newblock \emph{arXiv preprint arXiv:1502.03316}, 2015.

\bibitem[Bachem et~al.(2016{\natexlab{a}})Bachem, Lucic, Hassani, and
  Krause]{bachem2016fast}
Bachem, O., Lucic, M., Hassani, H., and Krause, A.
\newblock Fast and provably good seedings for k-means.
\newblock In \emph{Advances in neural information processing systems}, pp.\
  55--63, 2016{\natexlab{a}}.

\bibitem[Bachem et~al.(2016{\natexlab{b}})Bachem, Lucic, Hassani, and
  Krause]{bachem2016approximate}
Bachem, O., Lucic, M., Hassani, S.~H., and Krause, A.
\newblock Approximate k-means++ in sublinear time.
\newblock In \emph{Thirtieth AAAI Conference on Artificial Intelligence},
  2016{\natexlab{b}}.

\bibitem[Bachem et~al.(2017)Bachem, Lucic, and Krause]{bachem2017distributed}
Bachem, O., Lucic, M., and Krause, A.
\newblock Distributed and provably good seedings for k-means in constant
  rounds.
\newblock In \emph{Proceedings of the 34th International Conference on Machine
  Learning-Volume 70}, pp.\  292--300. JMLR. org, 2017.

\bibitem[Bachem et~al.(2018)Bachem, Lucic, and Krause]{bachem2018scalable}
Bachem, O., Lucic, M., and Krause, A.
\newblock Scalable k-means clustering via lightweight coresets.
\newblock In \emph{Proceedings of the 24th ACM SIGKDD International Conference
  on Knowledge Discovery \& Data Mining}, pp.\  1119--1127, 2018.

\bibitem[Bahmani et~al.(2012)Bahmani, Moseley, Vattani, Kumar, and
  Vassilvitskii]{bahmani2012scalable}
Bahmani, B., Moseley, B., Vattani, A., Kumar, R., and Vassilvitskii, S.
\newblock Scalable k-means++.
\newblock \emph{Proceedings of the VLDB Endowment}, 5\penalty0 (7):\penalty0
  622--633, 2012.

\bibitem[Barbosa et~al.(2015{\natexlab{a}})Barbosa, Ene, Nguyen, and
  Ward]{sub4}
Barbosa, R., Ene, A., Nguyen, H., and Ward, J.
\newblock The power of randomization: Distributed submodular maximization on
  massive datasets.
\newblock In \emph{International Conference on Machine Learning}, pp.\
  1236--1244, 2015{\natexlab{a}}.

\bibitem[Barbosa et~al.(2015{\natexlab{b}})Barbosa, Ene, Nguyen, and
  Ward]{sub3}
Barbosa, R., Ene, A., Nguyen, H.~L., and Ward, J.
\newblock A new framework for distributed submodular maximization,
  2015{\natexlab{b}}.

\bibitem[Bhattacharya et~al.(2019)Bhattacharya, Eube, Röglin, and
  Schmidt]{bhattacharya2019noisy}
Bhattacharya, A., Eube, J., Röglin, H., and Schmidt, M.
\newblock Noisy, greedy and not so greedy k-means++, 2019.

\bibitem[Brunsch \& R{\"o}glin(2013)Brunsch and R{\"o}glin]{brunsch2013bad}
Brunsch, T. and R{\"o}glin, H.
\newblock A bad instance for k-means++.
\newblock \emph{Theoretical Computer Science}, 505:\penalty0 19--26, 2013.

\bibitem[Celebi et~al.(2013)Celebi, Kingravi, and Vela]{celebi2013comparative}
Celebi, M.~E., Kingravi, H.~A., and Vela, P.~A.
\newblock A comparative study of efficient initialization methods for the
  k-means clustering algorithm.
\newblock \emph{Expert systems with applications}, 40\penalty0 (1):\penalty0
  200--210, 2013.

\bibitem[Choo et~al.(2020)Choo, Grunau, Portmann, and Rozhoň]{we}
Choo, D., Grunau, C., Portmann, J., and Rozhoň, V.
\newblock k-means++: few more steps yield constant approximation, 2020.

\bibitem[Dean \& Ghemawat(2004)Dean and Ghemawat]{dean2004mapreduce}
Dean, J. and Ghemawat, S.
\newblock Mapreduce: Simplified data processing on large clusters.
\newblock 2004.

\bibitem[Ghaffari()]{mohsen}
Ghaffari, M.
\newblock personal communication.

\bibitem[Guha et~al.(2003)Guha, Meyerson, Mishra, Motwani, and
  O'Callaghan]{guha2003streaming}
Guha, S., Meyerson, A., Mishra, N., Motwani, R., and O'Callaghan, L.
\newblock Clustering data streams: Theory and practice.
\newblock \emph{IEEE transactions on knowledge and data engineering},
  15\penalty0 (3):\penalty0 515--528, 2003.

\bibitem[Jaiswal et~al.(2012)Jaiswal, Kumar, and Sen]{jaiswal2012simple}
Jaiswal, R., Kumar, A., and Sen, S.
\newblock A simple $d^2$-sampling based ptas for k-means and other clustering
  problems, 2012.

\bibitem[Jaiswal et~al.(2015)Jaiswal, Kumar, and Yadav]{jaiswal2015improved}
Jaiswal, R., Kumar, M., and Yadav, P.
\newblock Improved analysis of $d^2$-sampling based ptas for k-means and other
  clustering problems.
\newblock \emph{Information Processing Letters}, 115\penalty0 (2):\penalty0
  100--103, 2015.

\bibitem[Karloff et~al.(2010)Karloff, Suri, and
  Vassilvitskii]{karloff2010model}
Karloff, H., Suri, S., and Vassilvitskii, S.
\newblock A model of computation for mapreduce.
\newblock In \emph{Proceedings of the twenty-first annual ACM-SIAM symposium on
  Discrete Algorithms}, pp.\  938--948. SIAM, 2010.

\bibitem[Krause \& Golovin()Krause and Golovin]{krause2014submodular}
Krause, A. and Golovin, D.
\newblock Submodular function maximization.

\bibitem[Lattanzi \& Sohler(2019)Lattanzi and Sohler]{lattanzi2019better}
Lattanzi, S. and Sohler, C.
\newblock A better k-means++ algorithm via local search.
\newblock In \emph{International Conference on Machine Learning}, pp.\
  3662--3671, 2019.

\bibitem[Lee et~al.(2017)Lee, Schmidt, and Wright]{lee2017improved}
Lee, E., Schmidt, M., and Wright, J.
\newblock Improved and simplified inapproximability for k-means.
\newblock \emph{Information Processing Letters}, 120:\penalty0 40--43, 2017.

\bibitem[Lenzen \& Wattenhofer(2011)Lenzen and Wattenhofer]{lenzen2011tight}
Lenzen, C. and Wattenhofer, R.
\newblock Tight bounds for parallel randomized load balancing, 2011.

\bibitem[Liu \& Vondrak(2018)Liu and Vondrak]{sub2}
Liu, P. and Vondrak, J.
\newblock Submodular optimization in the mapreduce model.
\newblock \emph{arXiv preprint arXiv:1810.01489}, 2018.

\bibitem[{Lloyd}(1982)]{Lloyd82}
{Lloyd}, S.
\newblock Least squares quantization in pcm.
\newblock \emph{IEEE Transactions on Information Theory}, 28\penalty0
  (2):\penalty0 129--137, March 1982.
\newblock ISSN 1557-9654.
\newblock \doi{10.1109/TIT.1982.1056489}.

\bibitem[Mahajan et~al.(2009)Mahajan, Nimbhorkar, and
  Varadarajan]{mahajan2009planar}
Mahajan, M., Nimbhorkar, P., and Varadarajan, K.
\newblock The planar k-means problem is np-hard.
\newblock In \emph{International Workshop on Algorithms and Computation}, pp.\
  274--285. Springer, 2009.

\bibitem[Mettu \& Plaxton(2012)Mettu and Plaxton]{mettu2012optimal}
Mettu, R. and Plaxton, G.
\newblock Optimal time bounds for approximate clustering, 2012.

\bibitem[Mirzasoleiman et~al.(2013)Mirzasoleiman, Karbasi, Sarkar, and
  Krause]{sub6}
Mirzasoleiman, B., Karbasi, A., Sarkar, R., and Krause, A.
\newblock Distributed submodular maximization: Identifying representative
  elements in massive data.
\newblock In \emph{Advances in Neural Information Processing Systems}, pp.\
  2049--2057, 2013.

\bibitem[Ostrovsky et~al.(2013)Ostrovsky, Rabani, Schulman, and
  Swamy]{ostrovsky2013effectiveness}
Ostrovsky, R., Rabani, Y., Schulman, L.~J., and Swamy, C.
\newblock The effectiveness of lloyd-type methods for the k-means problem.
\newblock \emph{Journal of the ACM (JACM)}, 59\penalty0 (6):\penalty0 1--22,
  2013.

\bibitem[Wei(2016)]{wei2016constant}
Wei, D.
\newblock A constant-factor bi-criteria approximation guarantee for k-means++.
\newblock In \emph{Advances in Neural Information Processing Systems}, pp.\
  604--612, 2016.

\end{thebibliography}
\bibliographystyle{icml2020}

\end{document}